\DeclarePairedDelimiter{\ceil}{\lceil}{\rceil}
\newenvironment{theorem}[2][Theorem]{\begin{trivlist}
\item[\hskip \labelsep {\bfseries #1}\hskip \labelsep {\bfseries #2.}]}{\end{trivlist}}
\newenvironment{corollary}[2][Corollary]{\begin{trivlist}
\item[\hskip \labelsep {\bfseries #1}\hskip \labelsep {\bfseries #2.}]}{\end{trivlist}}
\begin{document}
 
% --------------------------------------------------------------
%                         Start here
% --------------------------------------------------------------
 
\title{On the Computation of 2-Dimensional Recurrence Equations}
\author{Giuseppe Natale
\date{July 11, 2017}}
 
\maketitle

\section*{Introduction}
A generic 2-dimensional recurrence problem can be defined as:
\begin{align}
w_{i,j} &= a_{i,j} w_{i-1,j} + b_{i,j} w_{i,j-1} + c_{i,j}  & \forall (i,j) &\in \{1,..,n\}\times\{1,..,n\} \label{eq:2d} \\
w_{0,j} &= c_{0,j}                                              & \forall j &\in \{1,..,n\} \nonumber \\
w_{i,0} &= c_{i,0}                                              & \forall i &\in \{1,..,n\} \nonumber
\end{align}
It is quite easy to notice that the computational pattern induced by the dependence relations is a wavefront pattern. A visualization of such pattern is also shown in Figure \ref{fig:wf}. Indeed, wavefront parallelization is what is nowadays used to efficiently compute such recurrence problems, as it is able to reduce the computation time from being proportional to $n^2$ to being proportional to $2n-1$.

\begin{figure}
  \centering
  \includegraphics[width=0.3\linewidth]{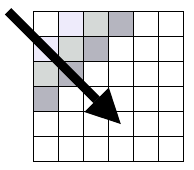}
  \caption{Wavefront computation pattern induced by the dependence relation in (\ref{eq:2d}).}
  \label{fig:wf}
\end{figure}

I here show how it is possible to systematically reduce a 2-dimensional recurrence problem of the above form into a mono-dimensional recurrence problem of the form:
\begin{align}
x_d &= h_d x_{d-1} + k_d  & \forall d &\in \{2,..,2n-1\} \label{eq:1d} \\
x_1 &= k_1 \nonumber
\end{align}
with $x_d$ and $k_d$ column vectors of length $n$, and $h_d$ a matrix of dimension $n\times n$.
This will prove that Kogge and Stone algorithm proposed in \cite{Kogge:1973:PAE:1638607.1639095} is applicable even to 2-dimensional recurrence relations (and probably also to 3+ dimensions following a similar approach). %, executing in time  proportional to $\ceil{log_2(2n-1)}$.

\section*{Discussion}
As previously said, the wavefront computational pattern enforced by the data dependencies can be exploited to perform parallelization and make the computation time proportional to $2n-1$ for an $n\times n$ grid. This is achieved by computing the grid points along the wavefront in parallel, in a way ``clustering'' the computation along the diagonal. I use the same ``clustering'' intuition to perform the proposed reduction, going from a 2-dimensional system $(i,j) \in \{1,..,n\}\times\{1,..,n\}$ to a mono-dimensional system $d \in \{1,..,2n-1\}$, where each $d$ corresponds to a diagonal -- i.e. a set of grid points of the same color as in Figure \ref{fig:wf} -- of the original 2-dimensional grid.
\begin{theorem}{1}
A 2-dimensional recurrence problem of the form of (\ref{eq:2d}) for an arbitrary $n$ can always be reduced to a mono-dimensional equivalent of the form of (\ref{eq:1d}), where $x_d$ and $k_d$ are column vectors of length $n$, and $h_d$ is a matrix of dimension $n\times n$. If the reduction provides less non-zero elements than $n$ for $x_d$ and $k_d$ (and less non-zero rows than  $n$ for $h_d$), then a number of 0s (or rows of 0s) is inserted after the last non-zero element until the vectors/matrices reach the required cardinality. This is just for consistency purposes and could be avoided in practical implementations.
\end{theorem}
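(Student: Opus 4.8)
\noindent\emph{Proof outline (proposed).} The plan is to make the ``clustering along diagonals'' picture of the Discussion precise. For $d\in\{1,\dots,2n-1\}$ call the $d$-th antidiagonal the set of grid points $\{(i,j):i+j=d+1,\ 1\le i,j\le n\}$; writing $j=d+1-i$ this is the set of row indices $i$ with $\max(1,d+1-n)\le i\le\min(n,d)$, which has $d$ elements when $d\le n$ and $2n-d$ elements when $d\ge n$. The one structural fact that drives everything is: if $(i,j)$ is on antidiagonal $d$, then its two predecessors $(i-1,j)$ and $(i,j-1)$ satisfy $(i-1)+j=i+(j-1)=d$, so they lie on antidiagonal $d-1$ — unless $i=0$ or $j=0$, in which case their values are the prescribed boundary constants $c_{0,\cdot}$ or $c_{\cdot,0}$. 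Hence the values on antidiagonal $d$ depend only on the values on antidiagonal $d-1$, which is exactly the shape required by (\ref{eq:1d}).

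Next I would fix the vectors. For each $d$ let $x_d\in\mathbb{R}^n$ collect the values $w_{i,d+1-i}$ as $i$ ranges over antidiagonal $d$ in increasing order, stored in the top slots and padded below by zeros: concretely $x_d[i]=w_{i,d+1-i}$ for $d\le n$, and $x_d[m]=w_{d-n+m,\,n+1-m}$ for $d\ge n+1$, so that each $(i,j)$ with $i+j=d+1$ occupies a well-defined slot and is recoverable from $x_d$. Then $h_d$ and $k_d$ are read off from the scalar recurrence (\ref{eq:2d}) by a short case analysis. For $2\le d\le n$ the row index $i$ is preserved from $x_{d-1}$ to $x_d$, and substituting $w_{i-1,d+1-i}=x_{d-1}[i-1]$ and $w_{i,d-i}=x_{d-1}[i]$ shows $h_d$ is lower bidiagonal with $(h_d)_{i,i}=b_{i,d+1-i}$ and $(h_d)_{i,i-1}=a_{i,d+1-i}$, and $k_d[i]=c_{i,d+1-i}$ except that the two ends of the antidiagonal absorb the boundary terms, $k_d[1]=c_{1,d}+a_{1,d}c_{0,d}$ and $k_d[d]=c_{d,1}+b_{d,1}c_{d,0}$, while rows and entries with index $>d$ are set to $0$. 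For $d\ge n+1$ the bottom row index of the antidiagonal advances by one, so slot $m$ of $x_d$ is row $d-n+m$, whose predecessors are slots $m$ and $m+1$ of $x_{d-1}$; this gives an upper bidiagonal $h_d$ with $(h_d)_{m,m}=a_{d-n+m,\,n+1-m}$, $(h_d)_{m,m+1}=b_{d-n+m,\,n+1-m}$, and $k_d[m]=c_{d-n+m,\,n+1-m}$, with no boundary corrections (for $d\ge n+1$ every point of the antidiagonal has $i\ge 2$ and $j\le n$, hence strictly interior predecessors). Finally $x_1=k_1$ with $k_1[1]=c_{1,1}+a_{1,1}c_{0,1}+b_{1,1}c_{1,0}$ and all other entries $0$.

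To close the argument I would verify equivalence by an easy induction on $d$: with the $h_d,k_d$ above, the solution of (\ref{eq:1d}) reproduces exactly the values $w_{i,j}$ on every antidiagonal (the zero padding is self-consistent, since the zeroed rows of $h_d$ and entries of $k_d$ force the padded slots to $0$, and the recurrence for a live slot only ever reads live slots of $x_{d-1}$), while conversely every $w_{i,j}$ with $1\le i,j\le n$ is the designated slot of $x_{i+j-1}$ and the boundary values are given outright. Thus the 2-dimensional problem and the reduced 1-dimensional problem are inter-reducible, and invoking the fact recalled in the excerpt that Kogge and Stone's scheme solves any recurrence of the form (\ref{eq:1d}), the theorem follows.

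I expect the only real obstacle to be bookkeeping rather than anything deep: getting the index map right across the ``turn'' at $d=n\to n+1$, where the natural coordinate along an antidiagonal switches from ``row $i$'' to ``offset from the top of the live block'', and at the same time keeping the boundary contributions — which occur only for $d\le n$, and only at the two ends of an antidiagonal — correctly folded into $k_d$. Each individual step is routine, but the case split $d=1$ / $2\le d\le n$ / $d=n+1$ / $n+1<d\le 2n-1$ must be laid out carefully for the single identity $x_d=h_dx_{d-1}+k_d$ to hold uniformly.
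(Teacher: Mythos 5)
Your proposal is correct and follows essentially the same route as the paper: you group the unknowns by antidiagonals $i+j=d+1$, define the padded vectors $x_d$, fold the boundary terms $a_{1,d}c_{0,d}$ and $b_{d,1}c_{d,0}$ into the ends of $k_d$ for $d\le n$, and obtain the same lower-/upper-bidiagonal $h_d$ (your entry $(h_d)_{d,d}=b_{d,1}$ for $d\le n$ differs from the paper's $0$ only in multiplying a padded zero slot of $x_{d-1}$, so it is immaterial). The explicit induction you add to verify equivalence is a welcome extra that the paper itself leaves implicit.
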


\begin{proof}
The reduction is performed by relating the different $(i,j), \,\, \forall (i,j) \in \{1,..,n\}\times\{1,..,n\}$ of the 2-dimensional space of (\ref{eq:2d}) to the different $d, \,\, \forall d \in \{1,..,2n-1\}$ of the mono-dimensional space of (\ref{eq:1d}). In particular, each $(i,j)$ belonging to a diagonal -- i.e. a set of grid points of the same color as in Figure \ref{fig:wf} -- is grouped within the same $d$.

Indeed, we can define the different $x_d, \,\, \forall d \in \{1, .., 2n-1\}$ from the $w_{i,j} \,\, \forall (i,j) \in \{1,..,n\}\times\{1,..,n\}$:
\begin{align}
x_d &= 
\begin{cases}
    \begin{bmatrix}
        w_{1, d} \\
        w_{2, d-1} \\
        \vdots \\
        w_{1+i,d-i} \\
        \vdots \\
        w_{d-1, 2} \\
        w_{d, 1}
    \end{bmatrix} & \text{if } 1\leq d\leq n, \quad \forall i \in \{0,..,d-1\}\\ \label{eq:x_def} \\
    \begin{bmatrix}
        w_{d+1-n, n} \\
        w_{d+2-n, n-1} \\
        \vdots \\
        w_{(d+1-n) + i, n - i} \\
        \vdots \\
        w_{n-1, d+2-n} \\
        w_{n, d+1-n}
    \end{bmatrix} & \text{if } n+1\leq d\leq 2n-1, \quad \forall i \in \{0,..,2n-d-1\}\\  
\end{cases}
\end{align}

\newpage

In this way, each $x_d$ represent a diagonal (a collection of grid points) of the original 2-dimensional space. The $x_d$ column vector has a fixed cardinality of $n$, because of how I defined (\ref{eq:1d}) (a number of 0s will be inserted after the last non-zero element given by the relation defined in (\ref{eq:x_def}) until the vector $x_d$ doesn't reach a cardinality of $n$). That said, the number of non-zero values inside each $x_d$ column-vector is:
\begin{flalign*}
|x_d| &= 
\begin{cases}
    d & \text{if } 1\leq d\leq n\\
    2n-d & \text{if } n+1\leq d\leq 2n-1\\  
\end{cases}
\end{flalign*}

Now that the relation between the unknowns of (\ref{eq:2d}) and its mono-dimensional equivalent (\ref{eq:1d}) is defined, we can define the $h_d$ and $k_d, \,\, \forall d \in \{1,..,2n-1\}$ of (\ref{eq:1d}) from the $a_{i,j}, b_{i,j}, c_{i,j}$ and the $c_{0,j}, c_{i,0}, \,\, \forall (i,j) \in \{1,..,n\}\times\{1,..,n\}$ of (\ref{eq:2d}).
\begin{align}
k_d &= 
\begin{cases}
    \begin{bmatrix}
        c_{1, d} + a_{1, d}c_{0, d}\\
        c_{2, d-1} \\
        \vdots \\
        c_{1+i,d-i} \\
        \vdots \\
        c_{d-1, 2} \\
        c_{d, 1} + b_{d, 1}c_{d, 0}
    \end{bmatrix} & \text{if } 1\leq d\leq n, \quad \forall i \in \{0,..,d-1\}\\ \label{eq:k_def} \\
    \begin{bmatrix}
        c_{d+1-n, n} \\
        c_{d+2-n, n-1} \\
        \vdots \\
        c_{(d+1-n) + i, n - i} \\
        \vdots \\
        c_{n-1, d+2-n} \\
        c_{n, d+1-n}
    \end{bmatrix} & \text{if } n+1\leq d\leq 2n-1, \quad \forall i \in \{0,..,2n-d-1\}\\  
\end{cases}
\end{align}

From the definition of $k_d$ in (\ref{eq:k_def}), it is evident that $x_1 = [w_{1,1},0, .., 0]^T$ is implicitly defined as $x_1 = k_1$ since $w_{1,1} = a_{1, 1}c_{0, 1} + b_{1, 1}c_{1, 0} + c_{1,1}$. This is indeed expected, as also defined in (\ref{eq:1d}), since $x_1$ is the known initial value of the recurrence relation. We can therefore define $h_d$ only $\forall d \in \{2,..,2n-1\}$, and we can leave $h_1$ as undefined (or $= 0$ for simplicity), since it is not needed.

\begin{align}
& \begin{cases}
h_1 = 0
\end{cases} \label{eq:h_def}  \\
& \begin{cases} 
\text{if}\,\, 2\leq d\leq n, \quad \forall i \in \{0,..,d-1\}\\
h_d = 
\begin{bmatrix}
    b_{1, d} & 0 & 0 & \dots & 0 & 0 & 0\\
    a_{2, d-1} & b_{2, d-1} & 0 & \dots & 0 & 0 & 0\\
    0 & a_{3, d-2} & b_{3, d-2} & \dots & 0 & 0 & 0\\
    \vdots & & \ddots & & & & \vdots\\
    0 & \dots & a_{1+i, d-i} & b_{1+i, d-i} & \dots & 0 & 0\\
    \vdots & & & & \ddots & & \vdots\\
    0 & 0 & 0 & \dots & a_{d-1, 2} & b_{d-1, 2} & 0 \\
    0 & 0 & 0 & \dots & 0 & a_{d, 1} & 0 \\
\end{bmatrix}
\end{cases} \nonumber \\
& \begin{cases}
\text{if}\,\, n+1\leq d\leq 2n-1, \quad \forall i \in \{0,..,2n-d-1\}\\
h_d = 
\begin{bmatrix}
    a_{d+1-n, n} & b_{d+1-n, n} & 0 & \dots & 0 & 0 & 0\\
    0 & a_{d+2-n, n-1} & b_{d+2-n, n-1} & \dots & 0 & 0 & 0\\
    \vdots & & \ddots & & & & \vdots\\
    0 & \dots & a_{(d+1-n)+i, n-i} & b_{(d+1-n)+i, n-i} & \dots & 0 & 0\\
    \vdots & & & & \ddots & & \vdots\\
    0 & 0 & 0 & \dots & a_{n-1, d+2-n} & b_{n-1, d+2-n} & 0 \\
    0 & 0 & 0 & \dots & 0 & a_{n, d+1-n} & b_{n, d+1-n} \\
\end{bmatrix} \end{cases} \nonumber
\end{align}
\end{proof}

\begin{corollary}{1}
The ability to always reduce a 2-dimensional recurrence problem of the form of (\ref{eq:2d}) to a mono-dimensional equivalent of the form of (\ref{eq:1d}), as per Theorem 1, imply that Kogge and Stone algorithm presented in \cite{Kogge:1973:PAE:1638607.1639095} is indeed applicable. Moreover, each row of a generic $x_d = h_d x_{d-1} + k_d \,\, \forall d \in \{2,..,2n-1\}$ is independent from the others and can indeed be computed in parallel. This proves that -- at least theoretically -- by applying Kogge and Stone algorithm and parallelizing the computation of each row of the different $x_d \,\, \forall d \in \{1,..,2n-1\}$, the recurrence relations can be computed in time proportional to $\ceil{log_2(2n-1)}$.
\end{corollary}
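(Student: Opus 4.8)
The plan is to read Theorem~1 as the statement that problem~(\ref{eq:2d}) has been rewritten \emph{exactly} in the affine first-order shape $x_d = h_d x_{d-1} + k_d$ on which the Kogge--Stone recursive-doubling solver operates, and then to check two things: that the solver reaches the base case in $\ceil{\log_2(2n-1)}$ rounds, and that a single round can be run at bounded parallel depth because the $n$ rows of every $x_d$ (and the entries of every $h_d$) carry no mutual dependence.

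First I would recall the recursive-doubling step applied to the mono-dimensional system of Theorem~1. Put $A_d^{(0)} := h_d$ and $c_d^{(0)} := k_d$, and for $m \ge 0$ set
\begin{align*}
A_d^{(m+1)} &:= A_d^{(m)}\,A_{d-2^m}^{(m)}, & c_d^{(m+1)} &:= A_d^{(m)}\,c_{d-2^m}^{(m)} + c_d^{(m)},
\end{align*}
with the standard boundary convention, which is legitimate here precisely because Theorem~1 sets $h_1 = 0$, so that $x_1 = k_1$ is absorbed as soon as a chain of factors reaches index $1$. A one-line induction on $m$ gives $x_d = A_d^{(m)} x_{d-2^m} + c_d^{(m)}$ for every admissible $d$, so after $M := \ceil{\log_2(2n-1)}$ rounds the reach $2^M$ satisfies $2^M \ge 2n-1 \ge d$, every chain has bottomed out at the known $x_1$, and $x_d = c_d^{(M)}$ for all $d \in \{1,\dots,2n-1\}$. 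This is the applicability claim: nothing beyond the reduction of Theorem~1 is needed to run the algorithm of \cite{Kogge:1973:PAE:1638607.1639095}, and it terminates in $\ceil{\log_2(2n-1)}$ rounds.

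Next I would justify the row-wise parallelism. Within round $m$ the work is, for each $d$, one matrix--matrix product, one matrix--vector product and one vector addition. No entry of $A_d^{(m+1)}$ is required to compute another entry of $A_d^{(m+1)}$, and likewise no component of $c_d^{(m+1)}$ (equivalently, of the prospective $x_d$) is required to compute another component of the same vector; each entry is a fixed-shape combination of quantities already produced in round $m$. Hence all entries --- over the $n$ rows of each vector, over the $n^2$ positions of each matrix, and over all $d \in \{1,\dots,2n-1\}$ --- can be evaluated simultaneously, so each round adds only a constant to the critical-path length (one inner-product reduction, counted as a unit parallel primitive). Multiplying the per-round depth by the $M$ rounds yields a running time proportional to $\ceil{\log_2(2n-1)}$, which is the bound in the statement.

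The main obstacle is honesty about the cost model. The clean $\ceil{\log_2(2n-1)}$ figure depends on charging unit time for a matrix--matrix (resp. matrix--vector) product --- an idealized-PRAM reading that the row independence above is exactly what legitimizes --- whereas under a model that charges $\Theta(\log n)$ per reduction the bound becomes $\Theta(\log n \cdot \log(2n-1))$; one should also note that the bandwidth of $A_d^{(m)}$ grows from $2$ to $\Theta(n)$ across the $M$ rounds, so the per-round arithmetic volume, and hence the required processor count, is not literally constant. I would therefore state the corollary under the idealization "with sufficiently many processors and unit-cost matrix primitives," which is plainly what is intended, and leave the finite-processor accounting to an implementation.
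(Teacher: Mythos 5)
Your proposal is correct, and it is worth noting that the paper offers no proof of this corollary at all: it is asserted as an immediate consequence of Theorem~1 plus the citation to Kogge and Stone. What you have written is exactly the argument the author is implicitly invoking --- the recursive-doubling recurrences for $A_d^{(m)}$ and $c_d^{(m)}$, the induction giving $x_d = A_d^{(m)}x_{d-2^m} + c_d^{(m)}$, termination after $\ceil{\log_2(2n-1)}$ rounds because $h_1=0$ lets every composition chain absorb the known $x_1 = k_1$, and the observation that all entries produced within a round depend only on quantities from the previous round and are therefore mutually independent --- so there is nothing to compare, only detail you have usefully supplied. Your closing caveat is in fact a substantive refinement of the paper's claim: the headline bound of $\ceil{\log_2(2n-1)}$ holds only if each matrix--matrix or matrix--vector product is charged unit time, whereas charging $\Theta(\log n)$ depth per length-$n$ inner-product reduction gives overall depth $\Theta(\log n \cdot \log(2n-1))$, and the fill-in of the $A_d^{(m)}$ from bidiagonal to dense means the per-round work (hence processor count) is not constant either. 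The paper's hedge ``at least theoretically'' is doing precisely the work that your explicit idealization (sufficiently many processors, unit-cost matrix primitives) makes honest.
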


\bibliographystyle{unsrt}
\bibliography{bib}

\section*{Appendix: Example}
I here show the reduction from (\ref{eq:2d}) to (\ref{eq:1d}) for $n=4$ as an example.
\begin{align*}
x_1 &= \begin{bmatrix} w_{1,1} \\ 0 \\ 0 \\ 0 \end{bmatrix}
& x_2 &= \begin{bmatrix} w_{1,2} \\ w_{2,1} \\ 0 \\ 0 \end{bmatrix}
& x_3 &= \begin{bmatrix} w_{1,3} \\ w_{2,2} \\ w_{1,3} \\ 0 \end{bmatrix}
& x_4 &= \begin{bmatrix} w_{1,4} \\ w_{2,3} \\ w_{3,2} \\ w_{4,1} \end{bmatrix}\\
x_5 &= \begin{bmatrix} w_{2,4} \\ w_{3,3} \\ w_{4,2} \\ 0 \end{bmatrix}
& x_6 &= \begin{bmatrix} w_{3,4} \\ w_{4,3} \\ 0 \\ 0 \end{bmatrix}
& x_7 &= \begin{bmatrix} w_{4,4} \\ 0 \\ 0 \\ 0 \end{bmatrix}
\end{align*}

Therefore:
\begin{align*}
k_1 &= \begin{bmatrix} c_{1,1} + a_{1,1}c_{0,1} + b_{1,1}c_{1,0} \\ 0 \\ 0 \\ 0 \end{bmatrix}
& k_2 &= \begin{bmatrix} c_{1,2} + a_{1,2}c_{0,2} \\ c_{2,1} + b_{2,1}c_{2,0} \\ 0 \\ 0 \end{bmatrix}
& k_3 &= \begin{bmatrix} c_{1,3} + a_{1,3}c_{0,3} \\ c_{2,2} \\ c_{3,1} + b_{3,1}c_{3,0} \\ 0 \end{bmatrix}\\
k_4 &= \begin{bmatrix} c_{1,4} + a_{1,4}c_{0,4} \\ c_{2,3} \\ c_{3,2} \\ c_{4,1} + b_{4,1}c_{4,0} \end{bmatrix}
& k_5 &= \begin{bmatrix} c_{2,4} \\ c_{3,3} \\ c_{4,2} \\ 0 \end{bmatrix}
& k_6 &= \begin{bmatrix} c_{3,4} \\ c_{4,3} \\ 0 \\ 0 \end{bmatrix}\\
k_7 &= \begin{bmatrix} c_{4,4} \\ 0 \\ 0 \\ 0 \end{bmatrix}\\ \nonumber \\
h_2 &= \begin{bmatrix}
    b_{1,2} & 0 & 0 & 0 \\
    a_{2,1} & 0 & 0 & 0 \\
    0 & 0 & 0 & 0 \\
    0 & 0 & 0 & 0 \\
\end{bmatrix}
& h_3 &= \begin{bmatrix}
    b_{1,3} & 0 & 0 & 0 \\
    a_{2,2} & b_{2,2} & 0 & 0 \\ 
    0 & a_{3,1} & 0 & 0 \\ 
    0 & 0 & 0 & 0 \\
\end{bmatrix}\\
h_4 &= \begin{bmatrix}
    b_{1,4} & 0 & 0 & 0 \\
    a_{2,3} & b_{2,3} & 0 & 0 \\ 
    0 & a_{3,2} & b_{3,2} & 0 \\ 
    0 & 0 & a_{4,1} & 0 \\ 
\end{bmatrix}
& h_5 &= \begin{bmatrix}
    a_{2,4} & b_{2,4} & 0 & 0 \\
    0 & a_{3,3} & b_{3,3} & 0 \\ 
    0 & 0 & a_{4,2} & b_{4,2} \\ 
    0 & 0 & 0 & 0 \\
\end{bmatrix}\\
h_6 &= \begin{bmatrix}
    a_{3,4} & b_{3,4} & 0 & 0 \\
    0 & a_{4,3} & b_{4,3} & 0 \\ 
    0 & 0 & 0 & 0 \\
    0 & 0 & 0 & 0 \\
\end{bmatrix}
& h_7 &= \begin{bmatrix}
    a_{4,4} & b_{4,4} & 0 & 0 \\  
    0 & 0 & 0 & 0 \\
    0 & 0 & 0 & 0 \\
    0 & 0 & 0 & 0 \\
\end{bmatrix}
\end{align*}

% --------------------------------------------------------------
%     You don't have to mess with anything below this line.
% --------------------------------------------------------------
 
\end{document}